\begin{document}

\title{Quantum
information approach to normal representation of extensive games
}


\author{Piotr Fr\c{a}ckiewicz}
\newtheorem{lemma}{Lemma}
\newtheorem{definition}{Definition}[section]
\newtheorem{theorem}{Theorem}
\newtheorem{proposition}{Proposition}[section]
\newtheorem{example}[proposition]{Example}
\newenvironment{proof}{\noindent\textit{Proof.}}
{\nolinebreak[4]\hfill$\blacksquare$\\\par}

\author{\textsc{Piotr Fr\c{a}ckiewicz} \\Institute of Mathematics of the Polish Academy of Sciences\\
00-956 Warsaw, Poland}

\maketitle

\begin{abstract}
We modify the concept of quantum strategic game to make it useful
for extensive form games. We prove that our modification allows to
consider the normal representation of any finite extensive game
using the fundamental concepts of quantum information. The
Selten's Horse game and the general form of two-stage extensive
game with perfect information are studied to illustrate a
potential application of our idea. In both examples we use
Eisert-Wilkens-Lewenstein approach as well as Marinatto-Weber
approach to quantization of games.\\

\noindent {\bf Keywords} \quad Extensive game $\cdot$ Normal
representation $\cdot$ Quantum game $\cdot$ Nash equilibrium

\noindent {\bf Mathematics Subject Classification (2000)} \quad
81P68~$\cdot$~91A18~$\cdot$~91A80 \end{abstract}
\section{Introduction}
Over the period of twelve years of research on quantum games
\cite{survey}, the idea of quantum strategic $2 \times 2$ game
\cite{eisert2} has been well established. From mathematical point
of view, the quantum information approach to a $2\times2$ game is
described by the four-tuple:
\begin{equation}
\label{fourtuple}\left(\mathscr{H}, |\psi_{\mathrm{in}}\rangle,
\{\mathcal{U}_{i}\}, \{E_{i}\}\right).
\end{equation}
The Hilbert space $\mathscr{H} = \mathds{C}^2 \otimes
\mathds{C}^2$ is the place of the game and the sets of unitary
operators $\mathcal{U}_{1}, \mathcal{U}_{2} \subseteq
\mathsf{SU}(2)$ play the role of strategy sets for the first and
the second player, respectively. Given a unit vector
$|\psi_{\mathrm{in}}\rangle \in \mathscr{H}$ ({\it the initial
state}), the players each choose a unitary operator $U_{i} \in
\mathcal{U}_{i}$ changing the vector $|\psi_{\mathrm{in}}\rangle
\in \mathscr{H}$ into the vector
$|\psi_{\mathrm{fin}}\rangle\mathrel{\mathop:}= \left(U_{1}\otimes
U_{2}\right)|\psi_{\mathrm{in}}\rangle$ ({\it the final state}).
The last components are the functionals $E_{i}\colon \mathscr{H}
\to \mathds{R}$ for $i=1,2$. They imitate payoff functions for the
players assigning a real number to the final state
$|\psi_{\mathrm{fin}}\rangle$. It turns out that the four tuple
(\ref{fourtuple}) generalizes playing a classical $2 \times 2$
game. The two well-known ways based on the framework
(\ref{fourtuple}): the Eisert-Wilkens-Lewenstein (EWL) scheme
\cite{eisert} and the Marinatto-Weber (MW) scheme \cite{marinatto}
show that a $2\times2$ game can be successfully written in the
form (\ref{fourtuple}): it is possible to set each of the four
components, so that a $2\times2$ game and the corresponding game
defined by the four-tuple (\ref{fourtuple}) are the same with
respect to a game-theoretic analysis. Another key feature is that
the protocol (\ref{fourtuple}) allows to achieve results
unavailable in the game played classically (see, for example,
\cite{flitney0} and \cite{fracor1}). In our paper we are going to
deal with extensive games in quantum domain. In spite of quite
many researches connected with this issue (for instance,
concerning quantum Stackelberg duopoly \cite{iqbalbackwards} and
\cite{khan}), there is no generally accepted framework for playing
quantum extensive games by now. Interestingly, we have shown in
\cite{fracor2} that (\ref{fourtuple}) may indeed be useful for
extensive games with imperfect information. In this paper, we
extend our previous idea. We prove that the slight modification of
$\left(\mathscr{H}, |\psi_{\mathrm{in}}\rangle,
\{\mathcal{U}_{i}\}, \{E_{i}\}\right)$ allows to obtain normal
representation of extensive games in the quantum domain.
\section{Preliminaries to game theory}
Definitions in the preliminaries are based on \cite{osborne}. This
section starts with a~definition of a~finite extensive game
(without chance moves).
\begin{definition} Let the following components be given:
\begin{itemize}
\item A finite set $N = \{1,2,\dots,n\}$ of players. \item A
finite set $H$ of finite sequences that satisfies the following
two properties:
\begin{itemize}
\item the empty sequence $\emptyset$ is a member of $H$; \item if
$(a_k)_{k = 1,2,\dots, K} \in H$ and $L<K$ then $(a_k)_{k =
1,2,\dots, L} \in H$.
\end{itemize}
Each member of $H$ is a history and each component of a history is
an action taken by a player. A history $(a_{1}, a_{2},\dots,
a_{K}) \in H$ is terminal if there is no $a_{K+1}$ such that
$(a_{1}, a_{2},\dots, a_{K}, a_{K+1}) \in H$. The set of actions
available after the nonterminal history $h$ is denoted $A(h) = \{a
\colon (h,a) \in H\}$ and the set of terminal histories is denoted
$Z$. \item The player function $P \colon H \setminus  Z
\rightarrow N$ that points to a player who takes an~action after
the history $h$. \item For each player $i\in N$ a partition
$\mathcal{I}_{i}$ of $\{h \in H \setminus Z: P(h) = i\}$ with the
property that for each $I_{i} \in \mathcal{I}_{i}$ and for each
$h$, $h'$ $\in I_{i}$  an equality $A(h) = A(h')$ is fulfilled.
Every information set $I_{i}$ of the partition corresponds to the
state of player's knowledge. When the player makes move after
certain history $h$ belonging to $I_{i}$, she knows that the
course of events of the game takes the form of one of histories
being part of this information set. She does not know, however, if
it is the history $h$ or the other history from $I_{i}$. \item For
each player $i \in N$ a utility function $u_{i}\colon Z \to
\mathds{R}$ which assigns a number (payoff) to each of the
terminal histories.
\end{itemize}
A five-tuple $\left(N, H, P, \{\mathcal{I}_{i}\},
\{u_{i}\}\right)$ is called a finite extensive game.
\label{edefinition}
\end{definition}
Our deliberations focus on games with perfect recall (although
Def.~\ref{edefinition} defines extensive games with imperfect
recall as well) - this means games in which at each stage every
player remembers all the information about a course of the game
that she knew earlier (see \cite{osborne} and \cite{myerson} to
learn about formal description of this feature).

The notions: action and strategy mean the same in static games,
because players choose their actions once and simultaneously. In
the majority of extensive games a player can make her decision
about an action depending on all the actions taken previously by
herself and also by all the other players. In other words, players
can make some plans of actions at their disposal such that these
plans point out to a specific action depending on the course of
a~game. Such a plan is defined as a strategy in an extensive game.
\begin{definition}
A pure strategy $s_{i}$ of a player $i$ in a game $(N, H, P,
\{\mathcal{I}_{i}\}, \{u_{i}\})$ is a~function that assigns an
action in $A(I_{i})$ to each information set $I_{i} \in
\mathcal{I}_{i}$. \label{strategy}
\end{definition}
Like in the theory of strategic games, {\em a mixed strategy}
$t_{i}$ of a player $i$ in an extensive game is a probability
distribution over the set of player $i$'s pure strategies.
Therefore, pure strategies are of course special cases of mixed
strategies and from this place whenever we shall write {\em
strategy} without specifying that it is either pure or mixed, this
term will cover both cases. Let us define an {\em outcome $O(s)$}
of a pure strategy profile $s = (s_{1}, s_{2},\dots, s_{n})$ in an
extensive game without chance moves to be a terminal history that
results if each player $i\in N$ follows the plan of $s_{i}$. More
formally, $O(s)$ is the history $(a_{1}, a_{2},\dots, a_{K}) \in
Z$ such that for $0 \leq k < K$ we have $s_{P(a_{1}, a_{2},\dots,
a_{k})}(a_{1}, a_{2},\dots, a_{k}) = a_{k+1}$.
\begin{definition}
Let an extensive game $\mathrm{\Gamma} = \left(N, H, P,
\{\mathcal{I}_{i}\}, \{u_{i}\}\right)$ be given. The normal
representation of $\mathrm{\Gamma}$ is a~strategic game $\left( N,
\{S_{i}\}, \{u_{i}'\} \right)$ in which for each player $i \in N$:
\begin{itemize}
\item $S_{i}$ is the set of pure strategies of a player $i$ in
$\mathrm{\Gamma}$; \item $u_{i}'\colon \prod_{i \in N}S_{i} \to
\mathds{R}$ defined as $u_{i}'(s)\mathrel{\mathop:}=u_{i}(O(s))$
for every $s \in \prod_{i \in N}S_{i}$ and $i \in N$.
\end{itemize}
\end{definition}
One of the most important notions in game theory is a notion of an
equilibrium introduced by John Nash in \cite{nash}. A Nash
equilibrium is a profile of strategies where the strategy of each
player is optimal if the choice of its opponents is fixed. In
other words, in the equilibrium none of the players has any reason
to unilaterally deviate from an equilibrium strategy. A precise
formulation is as follows:
\begin{definition}
Let $(N, \{S_{i}\}, \{u_{i}\})$ be a game in strategic form. A
profile of strategies $(t^*_{1}, t^*_{2},\dots,t^*_{n})$ is a Nash
equilibrium if for each player $i \in N$ and for all $s_{i} \in
S_{i}$:
\begin{equation}
u_{i}(t^*_{i}, t^*_{-i}) \geq u_{i}(s_{i}, t^*_{-i}) ~~
\mbox{where} ~~  t^*_{-i} =
(t^*_{1},\dots,t^*_{i-1},t^*_{i+1},\dots,t^*_{n}).\label{nashequation}
\end{equation} \label{nashequilibrium}
\end{definition}
A Nash equilibrium in an extensive game with perfect recall is a
Nash equilibrium of its normal representation, hence
Def.~\ref{nashequilibrium} applies to strategic games as well as
to extensive ones.
\section{Preliminaries to quantum computing}
In this section we give a brief overview of the Dirac notation and
basic terms of quantum information. The preliminaries are based on
\cite{nielsen} and are sufficient to study the paper. Nonetheless,
we encourage the reader unfamiliar with techniques from theory of
quantum information to consult \cite{nielsen} and, for example,
\cite{kaye}.

First of all we adopt the convention that instead of denoting
vectors by boldface letters, e.g. $\mathbf{v}$, they are denoted
as {\it kets}: $|v\rangle$.

Let $\mathds{C}^{m+1}$ be a vector space with the fixed basis
$\{|v_{0}\rangle, |v_{1}\rangle,\dots,|v_{m}\rangle\}$ and let
\begin{equation}
|\phi\rangle = a_{0}|v_{0}\rangle + a_{1}|v_{1}\rangle + \dots +
a_{m}|v_{m}\rangle, \quad \mbox{where} \quad a_{j} \in \mathds{C}.
\end{equation}
The vector $|\phi\rangle$ can be also written in the column matrix
notation
\begin{equation}
|\phi\rangle = \left(\begin{array}{cccc} a_{0} & a_{1} & \cdots &
a_{m}\end{array}\right)^T
\end{equation}
Let $\mathds{C}^{m+1}$ be now regarded as a Hilbert space and
$|\phi\rangle, |\chi\rangle \in \mathds{C}^{m+1}$. The inner
product of the vector $|\phi\rangle$ with the vector
$|\chi\rangle$ will be denoted by $\langle \phi|\chi\rangle$. The
notation $\langle \phi|$ is used for the dual vector to
$|\phi\rangle$. The dual vector $\langle \phi|$ (also called {\it
bra}) is a linear operator $\langle \phi|\colon \mathds{C}^{m+1}
\to \mathds{C}$ defined by $\langle
\phi|(|\chi\rangle)\mathrel{\mathop:}= \langle\phi|\chi\rangle.$
Thus, the inner product requirements imply that
\begin{equation} \label{dual}
\langle\phi| = a^*_{0}\langle v_{1}| + a^*_{1}\langle v_{2}| +
\dots + a^*_{m}\langle v_{m}|.
\end{equation}
The common assumption in quantum computing is to consider Hilbert
space $\mathds{C}^{m+1}$ with an orthonormal basis. Let us denote
the basis as $\{|x\rangle\}_{x=0,1,\dots,m}$ (also called {\it
computational basis}). Let $|\phi\rangle = \sum_{x}b_{x}|x\rangle$
and $|\chi\rangle = \sum_{x}c_{y}|x\rangle$ be the vectors with
respect to the basis $\{|x\rangle\}_{x=0,1,\dots,m}$. Then the
inner product $\langle\phi|\chi\rangle$ can be expressed in terms
of matrix multiplication:
\begin{equation}
\langle\phi|\chi\rangle = \left(\begin{array}{cccc} b^*_{0} &
b^*_{1} & \cdots &
b^*_{m}\end{array}\right)\left(\begin{array}{cccc} c_{0} & c_{1} &
\cdots & c_{m}\end{array}\right)^T.
\end{equation}
In this case, the dual vector $\langle\phi|$ has a row matrix
representation whose entries are complex conjugates of the
corresponding entries of the column matrix representation of
$|\phi\rangle$.

The fundamental concept of quantum information is {\it quantum
bit} ({\it qubit}) described mathematically as a unit vector
$|\varphi\rangle$ in a Hilbert space $\mathds{C}^2$. According to
the notation explained above: \begin{equation} \label{qubit}
|\varphi\rangle = d_{0}|0\rangle + d_{1}|1\rangle,\quad
\mbox{where}\quad d_{0},d_{1} \in \mathds{C} \quad \mbox{and}
\quad |d_{0}|^2 + |d_{1}|^2=1.
\end{equation}
The measurement of a qubit with respect to an orthonormal basis
$\{|w_{0}\rangle, |w_{1}\rangle\}$ (not necessarily in the
computational basis) yields the result $w_{0}$ or $w_{1}$ with
probability $|\langle w_{j}|\varphi\rangle|^2$ leaving the qubit
in the corresponding state $|w_{0}\rangle$ or $|w_{1}\rangle$. In
particular, measuring the qubit given by (\ref{qubit}) with
respect to $\{|0\rangle, |1\rangle\}$ results in the outcome 0
with probability $|d_{0}|^2$ and the outcome 1 with probability
$|d_{1}|^2$, with post-measurement states $|0\rangle$ and
$|1\rangle$, respectively.

Suppose $\mathscr{H}_{1}$ and $\mathscr{H}_{2}$ are Hilbert spaces
with orthonormal bases $\{|x\rangle\}_{x=0,1,\dots,m_{1}}$ and
$\{|y\rangle\}_{y=0,1,\dots,m_{2}}$, respectively. Then the tensor
product $\mathscr{H}_{1} \otimes \mathscr{H}_{2}$ is a~Hilbert
space of $(m_{1}+1)(m_{2}+1)$ dimensionality with the orthonormal
basis $\{|x\rangle \otimes |y\rangle\}$. The matrix representation
of an element $|x\rangle \otimes |y\rangle$ is the Kronecker
product of respective matrix representations of $|x\rangle$ and
$|y\rangle$. In the further part of the paper we use the
abbreviated notation $|x\rangle|y\rangle$ or $|x,y\rangle$ for the
tensor product $|x\rangle \otimes |y\rangle$.

A system of $n$ qubits $|\varphi_{i}\rangle$ is described as a
unit vector $|\psi\rangle$ in the tensor product space
$\bigotimes^n_{j=1}\mathds{C}^2$ that has $2^n$-element
computational basis
\begin{equation}
\{|x_{1}\rangle \otimes |x_{2}\rangle \otimes \dots \otimes
|x_{n}\rangle\}_{x_{j} = 0,1}.
\end{equation}
 Thus, it is described by the vector
\begin{equation}\begin{split}\label{multiple}
&|\psi\rangle =
\sum_{x_{1},x_{2},...,x_{n}}d_{x_{1},x_{2},...,x_{n}}|x_{1},x_{2},\dots,x_{n}\rangle,\\
&\mbox{where} \quad d_{x_{1},x_{2},...,x_{n}} \in \mathds{C} \quad
\mbox{and} \quad
\sum_{x_{1},x_{2},...,x_{n}}|d_{x_{1},x_{2},...,x_{n}}|^2 =
1.\end{split}\end{equation} We say that the state (\ref{multiple})
is {\it separable} if it can be written as $|\psi\rangle =
\bigotimes^n_{i=1} |\varphi_{i}\rangle$ for some
$|\varphi_{i}\rangle \in \mathds{C}^2$, $i=1,2,\dots,n$. The dual
vector $\langle \psi|$ is defined in the same way as in
(\ref{dual}). Similarly, the measurement of the state given by
(\ref{multiple}) with respect to an orthonormal basis
$\{|w_{j}\rangle\}^{2^n}_{j=1}$ yields the result $w_{j}$ with
probability $|\langle w_{j}|\psi\rangle|^2$. Otherwise, the state
$|\psi\rangle$ is called {\it entangled}.\vspace{12pt}

\noindent We use the Dirac notation throughout the whole paper.
However, each of the results below can be easily reconstructed
using the matrix notation.
\section{Normal
representation of extensive games in quantum domain}   From that
moment on, we will consider extensive games with two available
actions at each information set so that we could use only qubits
for convienience. Any game richer in actions can be transferred to
quantum domain by using quantum objects of higher
dimensionality.\vspace{12pt}

\noindent Let us extend the protocol (\ref{fourtuple}) to include
components making it useful for extensive games. Such a quantum
game is specified by a six-tuple:
\begin{equation}
\mathrm{\Gamma}^{\mathrm{QI}} = \left(\mathscr{H}, N,
|\psi_{\mathrm{in}}\rangle, \xi, \{\mathcal{U}_{j}\},
\{E_{i}\}\right) \label{sixtuple}
\end{equation}
where the components are defined as follows:
\begin{itemize}
\item $\mathscr{H}$ is a complex Hilbert space $\bigotimes_{j=1}^m
\mathds{C}^2$ with an orthonormal basis $\mathcal{B}$. \item $N$
is a set of players with the property that $|N| \leq m$. \item
$|\psi_{\mathrm{in}}\rangle$ is the initial state of a quantum
system of $m$ qubits $|\varphi_{1}\rangle,
|\varphi_{2}\rangle,\dots,|\varphi_{m}\rangle$. \item $\xi\colon
\{1,2,\dots,m\} \to N$ is a surjective mapping. A value $\xi(j)$
indicates a~player who carries out a unitary operation on a qubit
$|\varphi_{j}\rangle$. \item For each $j \in \{1,2,\dots,m\}$ the
set $\mathcal{U}_{j}$ is a subset of unitary operators from
$\mathsf{SU}(2)$ that are available for a qubit $j$. A~(pure)
strategy of a player $i$ is a~map $\tau_{i}$ that assigns a
unitary operation $U_{j} \in \mathcal{U}_{j}$ to a qubit
$|\varphi_{j}\rangle$ for every $j \in \xi^{-1}(i)$. The final
state $|\psi_{\mathrm{fin}}\rangle$ when the players have
performed their strategies on corresponding qubits is defined as:
\begin{equation}
|\psi_{\mathrm{fin}}\rangle\mathrel{\mathop:}=
(\tau_{1},\tau_{2},\dots,\tau_{n})|\psi_{\mathrm{in}}\rangle =
\bigotimes_{i\in N}\bigotimes_{j\in \xi^{-1}(i)}U_{j}
|\psi_{\mathrm{in}}\rangle. \label{finalstate}
\end{equation} \item For each $i\in N$ the map $E_{i}$ is a utility (payoff) functional that
specifies a~utility for the player $i$. The functional $E_{i}$ is
defined by the formula:
\begin{equation}\label{eformula}
E_{i} = \sum_{|b\rangle \in \mathcal{B}}v_i(b)|\langle b
|\psi_{\mathrm{fin}}\rangle|^2, ~~ \mbox{where} ~~ v_i(b) \in
\mathds{R}.
\end{equation}
\end{itemize}
There are only two additional components in (\ref{sixtuple}): $N$
and $\xi$, in comparison with (\ref{fourtuple}). They completely
specify qubits to which a player is permitted to apply her unitary
operator. Notice also that the protocol of quantization of
strategic games according to \cite{eisert2} is obtained from
$\left(\mathscr{H}, N, |\psi_{\mathrm{in}}\rangle, \xi,
\{\mathcal{U}_{j}\}, \{E_{i}\}\right)$ by putting $|N| = m = 2$.
We claim that such addition together with appropriate fixed values
$v_{i}(b)$ in (\ref{eformula}) are sufficient for considering an
extensive game in quantum domain (of course, if the assumption
that the tuple $\left(\mathscr{H}, |\psi_{\mathrm{in}}\rangle,
\{\mathcal{U}_{i}\}, \{E_{i}\}\right)$ correctly describes
strategic games in quantum domain is true). The line of thought is
as follows. Any strategic game can be considered as a special case
of an extensive game where players move sequentially but each of
them does not have any knowledge about actions taken by the other
players. In other words, each player in a strategic game has
exactly one information set in which she takes an action. Thus, in
a simple case of $2\times2$ bimatrix game, the scheme
(\ref{fourtuple}), in fact, identifies an operation on a qubit
with player's move made at her unique information set, and then
the individual game outcomes are assigned to appropriate
measurement results. An extensive game can have many information
sets, and more than one of them can be assigned to the same
player. Therefore, our extension of $\left(\mathscr{H},
|\psi_{\mathrm{in}}\rangle, \{\mathcal{U}_{i}\}, \{E_{i}\}\right)$
is aimed at similar identification for extensive games. As
a~result, we obtain that we are able to write in (\ref{sixtuple})
the normal representation of an extensive game.

 Before we formulate the formal statement, notice first
that the tuple (\ref{sixtuple}), in fact, determines some game in
strategic form in the sense of classical game theory. If
$\mathscr{H}$ and $|\psi_{\mathrm{in}}\rangle$ are fixed, each
player $i\in N$ chooses her strategy from a set $\bigotimes_{j\in
\xi^{-1}(i)}\mathcal{U}_{j}$ and then the associated utility
$E_{i}$ is determined. Therefore, it always makes sense to
associate (\ref{sixtuple}) with some $\left( N, \{S_{i}\},
\{u_{i}\} \right)$. Secondly, let us specify a sufficient
condition of equivalence for two strategic games $\left( N,
\{S_{i}\}, \{u_{i}\} \right)$ and $\left( N, \{S'_{i}\},
\{u'_{i}\} \right)$. Namely, if there is a~bijective mapping
$g_{i} \colon S_{i} \to S'_{i}$ for each $i \in N$ such that for
each profile $s \in \prod_{i \in N}S_{i}$ we have $u(s) =
u'(g(s))$ where $g = \prod_{i \in N}g_{i}$, then the games are
\textit{isomorphic} (to find out more about isomorphisms of
strategic games see \cite{harsai}). Now, we can formulate the
following proposition:
\begin{proposition} \label{propositiondupa}
Let $\mathrm{\Gamma} = \left( N, H, P, \{\mathcal{I}_{i}\},
\{u_{i}\} \right)$ be a finite extensive game with two available
actions at each information set. Then there exists a six-tuple
(\ref{sixtuple}) that specifies a game isomorphic to the normal
representation of~$\mathrm{\Gamma}$.
\end{proposition}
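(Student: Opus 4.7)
The plan is to construct an explicit six-tuple $\Gamma^{\mathrm{QI}}$ by using one qubit per information set of $\Gamma$, with each player acting only on the qubits that correspond to her own information sets, and with the unitary choice at each qubit encoding the binary action choice at that information set.

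First I would enumerate. Since every information set has exactly two actions, let $m \mathrel{\mathop:}= \sum_{i\in N}|\mathcal{I}_{i}|$ and fix a bijection $j\mapsto I^{(j)}$ between $\{1,\dots,m\}$ and the disjoint union $\bigsqcup_{i\in N}\mathcal{I}_{i}$; for each information set $I^{(j)}$ label the two available actions as $a^{0}_{j}$ and $a^{1}_{j}$. Set $\mathscr{H}=\bigotimes_{j=1}^{m}\mathds{C}^{2}$ with the computational basis $\mathcal{B}$, choose $|\psi_{\mathrm{in}}\rangle = |0\rangle^{\otimes m}$, let $\xi(j)=i$ whenever $I^{(j)}\in\mathcal{I}_{i}$ (this is well defined and surjective because each player owns at least one information set), and set $\mathcal{U}_{j} = \{\mathds{1},X\}$ where $X$ is the Pauli bit-flip. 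A pure strategy of player $i$ in $\Gamma^{\mathrm{QI}}$ is then an element of $\prod_{j\in\xi^{-1}(i)}\{\mathds{1},X\}$.

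Next I would produce the bijection required by the isomorphism. Define $g_{i}\colon S_{i}\to \prod_{j\in\xi^{-1}(i)}\mathcal{U}_{j}$ by
\begin{equation}
g_{i}(s_{i})_{j} \mathrel{\mathop:}= \begin{cases} \mathds{1}, & \text{if } s_{i}(I^{(j)})=a^{0}_{j},\\ X, & \text{if } s_{i}(I^{(j)})=a^{1}_{j},\end{cases} \qquad j\in \xi^{-1}(i).
\end{equation}
Each $g_{i}$ is bijective because a pure strategy of $\Gamma$ is determined by its choice of action at every information set assigned to player $i$. For each computational basis element $|b\rangle = |b_{1},\dots,b_{m}\rangle \in \mathcal{B}$, the bit string $b$ recovers a pure strategy profile $s(b)=(s_{1}(b),\dots,s_{n}(b))$ by setting $s_{i}(b)(I^{(j)}) = a^{b_{j}}_{j}$ for $j\in\xi^{-1}(i)$, so I can define
\begin{equation}
v_{i}(b) \mathrel{\mathop:}= u_{i}\bigl(O(s(b))\bigr),
\end{equation}
which is well defined since $\Gamma$ has no chance moves and $O(s(b))$ is a single terminal history.

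Finally I would verify the payoff matching. For any profile of quantum strategies $(\tau_{1},\dots,\tau_{n})$ in the image of $g=\prod_{i}g_{i}$, the final state (\ref{finalstate}) is obtained by applying a tensor product of $\mathds{1}$'s and $X$'s to $|0\rangle^{\otimes m}$, which is again a single computational basis state $|b\rangle$; by construction this is precisely the string $b$ with $s(b) = g^{-1}(\tau_{1},\dots,\tau_{n})$. Hence $|\langle b'|\psi_{\mathrm{fin}}\rangle|^{2}=\delta_{b,b'}$ and the functional (\ref{eformula}) yields $E_{i}=v_{i}(b)=u_{i}(O(s(b)))=u'_{i}(s(b))$, which is exactly the payoff of the normal representation at $g^{-1}(\tau_{1},\dots,\tau_{n})$. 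This is the equality required by the isomorphism criterion recalled just before the proposition.

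The argument is essentially a bookkeeping exercise: the only non-trivial design decision is to restrict $\mathcal{U}_{j}$ to the two classical Pauli choices on the computational basis so that the final state stays a basis element and the functional $E_{i}$ collapses to a single value $v_{i}(b)$. The main conceptual step, therefore, is selecting $(|\psi_{\mathrm{in}}\rangle,\mathcal{U}_{j})$ so that unitary evolution implements a classical truth table; once that is done the bijections $g_{i}$ and the coefficients $v_{i}(b)$ are forced, and the isomorphism is immediate.
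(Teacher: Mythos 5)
Your construction is correct and takes essentially the same approach as the paper: one qubit per information set, the operator set restricted to the identity and the bit-flip so that every strategy profile sends a computational-basis initial state to a single basis state, $\xi$ induced by ownership of the information sets, and the utilities matched through the coefficients $v_{i}(b)$. The only difference is that you write out explicitly the bijections $g_{i}$ and the assignment $v_{i}(b)=u_{i}(O(s(b)))$ that the paper merely asserts to exist by an equinumerosity argument, which tightens the bookkeeping but is not a different proof.
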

\begin{proof}
Let us consider an $n$-player extensive game $\mathrm{\Gamma}$
with $m$ information sets. In addition, let us assume two-element
set of available actions $A(I_{i})$ in each information set
$I_{i}$. We specify components of the tuple
$\mathrm{\Gamma}^{\mathrm{QI}}$ as follows. Let $\mathcal{B}$ be
the computational basis of $\bigotimes_{j=1}^m \mathds{C}^2$ and
let the initial state $|\psi_{\mathrm{in}}\rangle$ be of the form
$|b\rangle$, where $|b\rangle$ is some fixed state of
$\mathcal{B}$. Let us restrict the set of available operators on
$\mathds{C}^2$ to the set of two operators $\{\sigma_{0},
\sigma_{1}\}$ where $\sigma_{0}$ is the identity operator and
$\sigma_{1}$ is the bit-flip Pauli operator. This specification
implies that for any mapping $\xi\colon \{1,2,\dots,m\} \to N$
specified in (\ref{sixtuple}) each strategy profile is an operator
of the form $\bigotimes_{j=1}^m\sigma^{j}$, where $\sigma^{j} \in
\{\sigma_{0}, \sigma_{1}\}$. Thus, for each strategy profile
$\tau$ there is some $|b'\rangle \in \mathcal{B}$ such that
\begin{equation}
 |\psi_{\mathrm{fin}}\rangle =
\bigotimes_{j=1}^m\sigma^{j}|\psi_{\mathrm{in}}\rangle =
|b'\rangle \langle b'| \quad \mbox{and} \quad E_{i}(\tau) =
v_i(b')\;\;\mbox{for}\;\;i \in N. \label{classicfin}
\end{equation}
Let us fix a bijective mapping $\zeta$ between $\{1,2,\dots,m\}$
and the set $\{I_{i}\}$ of all players' information sets of
$\mathrm{\Gamma}$. Since for each player $i$ and history $h \in
I_{i}$ we have $P(h) = i$ we can simply take $P(I_{i}) = i$. Then
the correspondence $\xi \mathrel{\mathop:}= P \circ \zeta$
associates each information set of each player with exactly one
qubit. As $|A(I_{i})| = |\{\sigma_{0}, \sigma_{1}\}| = 2$ and a
number of information sets of $\mathrm{\Gamma}$ is equal to a
number of qubits, a set of strategies $S_{i}$ of the normal
representation of $\mathrm{\Gamma}$ and a strategy set
$\mathcal{T}_{i}$ of quantum game defined by the tuple
(\ref{sixtuple}) are equinumerous (with cardinality equal
$2^{|\xi^{-1}(i)|}$ each) for each $i \in N$. Therefore, for each
$i \in N$, we can define a bijective mapping $g_{i}\colon S_{i}
\to \mathcal{T}_{i}$. These mappings induce the following
bijection between the sets of strategy profiles:
\begin{equation}
g = \left(g_{i}\right)_{i \in N} \colon \prod_{i\in N} S_{i} \to
\prod_{i\in N} \mathcal{T}_{i}.
\end{equation}
The equations in (\ref{classicfin}) imply that for all $i$ we can
select numbers $v_{i}(b) \in u_{i}(Z)$ in (\ref{eformula}) in a
way that $u_{i}(s) = E_{i}(g(s))$ where $u_{i}$ is the utility
function of the normal representation of $\mathrm{\Gamma}$. Such
specification of (\ref{sixtuple}) makes it isomorphic to the
normal representation of $\mathrm{\Gamma}$.
\end{proof}

Many researches on quantum games played via scheme
(\ref{fourtuple}) are based on appropriately fixed basis for a
space, the initial state, and a range of available unitary
operators, in order to obtain interesting properties of a quantum
game.  We will use the two best-known configurations of $\left(
\mathscr{H}, |\psi_{\mathrm{in}}\rangle, \{\mathcal{U}_{i}\},
\{E_{i}\}\right)$: the Marinatto-Weber (MW) scheme
\cite{marinatto} and the Eisert-Wilkens-Lewenstein (EWL) scheme
\cite{eisert} to examine extensive games via the protocol
(\ref{sixtuple}) (see also \cite{fracor1} and \cite{fracor2} for
other applications of these schemes). In the former scheme players
are allowed to use only the identity operator and the bit-flip
Pauli operator. The results superior to classical results are
obtained by manipulating the initial state
$|\psi_{\mathrm{in}}\rangle$. The later scheme allows to use
broader range of unitary operators (including also the whole set
$\mathsf{SU}(2)$). The following examples concern both settings.

To convert the following games into quantum ones,  we use the same
reasoning as in the proof of Proposition \ref{propositiondupa}.
The first example deals with a case where each player operates on
one qubit.
\begin{example} \textup{Let us consider a three player extensive game:
\begin{equation}
\mathrm{\Gamma}_{1} = \left(\{1,2,3\}, H, P, \{I_{i}\}_{i \in
\{1,2,3\}},\{u_i\}_{i \in \{1,2,3\}}\right) \label{seltenhorse}
\end{equation}
determined by the following components:
\begin{itemize}
\item $H = \{\emptyset, (a_0), (a_1), (a_0, c_0), (a_0, c_1),
(a_1, b_0), (a_1, b_1), (a_1, b_0, c_0), (a_1, b_0, c_1)\}$; \item
$P(\emptyset) = 1,~ P(a_1) = 2,~ P(a_0) = P(a_1, b_0) = 3$; \item
$I_{1} = \{\emptyset\},~ I_{2} = \{(a_1)\},~ I_{3} = \{(a_0),
(a_1, b_0)\}$; \item $u_{1,2}(a_0, c_0) = 3$, $u_{1,2}(a_0, c_1) =
u_{1,2}(a_1,b_0,c_1)= 0$, \\ $u_{1,2}(a_1,b_1) = 2$,
$u_{1,2}(a_1,b_0,c_0) = 5$, \\ $u_{3}(a_0, c_0) =
u_{3}(a_1,b_0,c_1)= 1$, $u_{3}(a_0, c_1) = u_{3}(a_1,b_0,c_0) =
0$, $u_{3}(a_1, b_1) =~2$.
\end{itemize}}
\noindent \textup{The game is depicted in Fig.~\ref{figure1}.}
\begin{figure*}
  \includegraphics[width=1\textwidth]{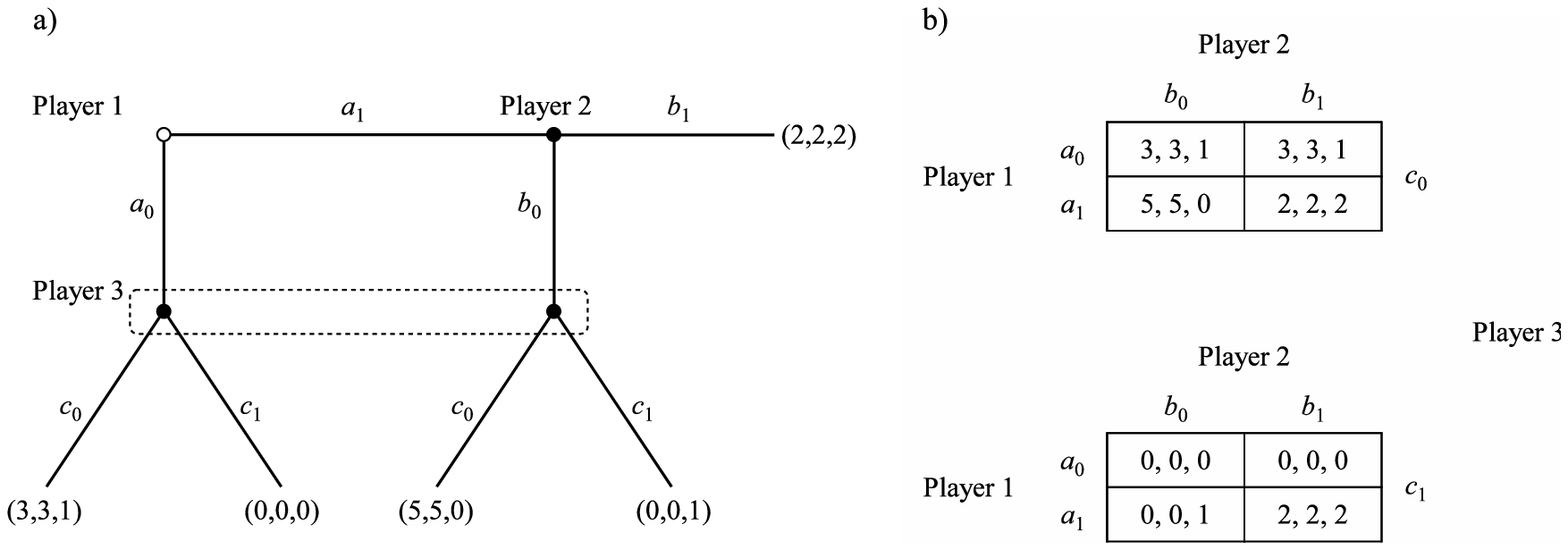}
\caption{The modified Selten's Horse game specified by
$\mathrm{\Gamma}_{1}$ a) and its normal representation  b) where
the players choose between rows, column and matrices,
respectively.}
\label{figure1}       
\end{figure*}
\textup{It is the Selten's Horse game \cite{selten0} with modified
payoffs. Since each of the players has one information set, their
sets of strategies are $\{a_{0}, a_{1}\}$, $\{b_{0}, b_{1}\}$, and
$\{c_{0}, c_{1}\}$, respectively. Profiles: $(a_0, b_1, c_0)$ and
$(a_1, b_1, c_1)$ are the only pure Nash equilibria in this game
and indeed each of them could be equally likely chosen as a
scenario of the game. The utilities for players 1 and 2 assigned
to
 $(a_0,b_1,c_0)$ are higher than the utilities corresponding to $(a_1, b_1, c_1)$ - a desirable profile for player 3. The
uncertainty of a result of the game follows from the peculiar
strategic position of player 3. She could try to affect the
decision of others by announcing before the game starts, that she
is going to take an action $c_{1}$. If the statement of player 3
is credible enough then the history $(a_{1},
b_{1})$ might occur.}\\

\noindent {\it The MW approach.} \, \textup{Let us examine the
Selten's Horse game via the protocol (\ref{sixtuple}). It turns
out that among quantum realizations of the game
$\mathrm{\Gamma}_{1}$ there exist ones that provide the players
with a unique reasonable solution. One of these realizations is
constructed, according to the idea of the MW scheme, as follows:
\begin{equation} \label{gammamw1}
\mathrm{\Gamma}^{\mathrm{MW}}_{1} = \left( \mathscr{H}_{c},
\{1,2,3\}, |\psi_{\mathrm{in}}(\gamma)\rangle,
\mathrm{id}_{\{1,2,3\}}, \{\{\sigma_{0}, \sigma_{1}\}_{i}\},
\{E_{i}\} \right),
\end{equation}
where:
\begin{itemize}
\item $\mathscr{H}_{c}$ is a Hilbert space $\bigotimes_{j=1}^3
\mathds{C}^2$ with the computational basis $\{|x_{1}\rangle
|x_{2}\rangle |x_{3}\rangle\}$, where $x_{j} \in \{0,1\}$ for
$j=1,2,3$; \item the initial state
$|\psi_{\mathrm{in}}(\gamma)\rangle$ takes the form:
\begin{align} |\psi_{\mathrm{in}}(\gamma)\rangle = \cos\frac{\gamma}{2}|000\rangle +
i\sin\frac{\gamma}{2}|111\rangle~~ \mbox{and}~~\gamma \in (0,\pi);
\end{align} \item $\mathrm{id}_{\{1,2,3\}}$ is an
identity mapping defined on $\{1,2,3\}$; \item the payoff
functionals $E_{i}$ are defined as follows:
\begin{equation}\begin{split}
E_{1,2} &=  3\sum_{x_{2}}|\langle
0,x_{2},0|\psi_{\mathrm{fin}}\rangle|^2 + 2\sum_{x_{3}}|\langle
11,x_{3}|\psi_{\mathrm{fin}}\rangle|^2 + 5|\langle 100|\psi_{\mathrm{fin}}\rangle|^2;\\
E_{3} &=  \sum_{x_{2}}|\langle
0,x_{2},0|\psi_{\mathrm{fin}}\rangle|^2 + 2\sum_{x_{3}}|\langle
11,x_{3}|\psi_{\mathrm{fin}}\rangle|^2+ |\langle
101|\psi_{\mathrm{fin}}\rangle|^2.
\end{split}
\end{equation}
\end{itemize}}
\textup{Let us first determine the utilities $E_{i}$ associated
with any profile $\sigma_{\kappa_1} \otimes \sigma_{\kappa_2}
\otimes \sigma_{\kappa_3}$, where $\kappa_{j} \in \{0,1\}$ for $j
= 1,2,3$. The final state $|\psi_{\mathrm{fin}}\rangle$ after the
operation $\sigma_{\kappa_{1}} \otimes \sigma_{\kappa_{2}} \otimes
\sigma_{\kappa_{3}}$ takes the form:
\begin{align}
(\sigma_{\kappa_{1}} \otimes \sigma_{\kappa_{2}} \otimes
\sigma_{\kappa_{3}})|\psi_{\mathrm{in}}\rangle &=
\cos\frac{\gamma}{2}|\kappa_{1},\kappa_{2},\kappa_{3}\rangle
 + i\sin\frac{\gamma}{2}|\overline{\kappa}_{1},\overline{\kappa}_{2},\overline{\kappa}_{3}\rangle,
\end{align}
where $\overline{\kappa}_{j}$ in the negation of $\kappa_{j}$.
Using the last equation and formula (\ref{eformula}) the expected
utilities, for example, for $(\sigma_{1} \otimes \sigma_{0}
\otimes \sigma_{1})$ become:
\begin{align}
E_{1,2}(\sigma_{1} \otimes \sigma_{0} \otimes \sigma_{1}) =
3\sin^2\frac{\gamma}{2}, ~~ E_{3}(\sigma_{1} \otimes \sigma_{0}
\otimes \sigma_{1}) = 1.
\end{align}
All possible values $E_{i}$ are shown in Fig.~\ref{figure2}.
\begin{figure*}
  \includegraphics[width=1\textwidth]{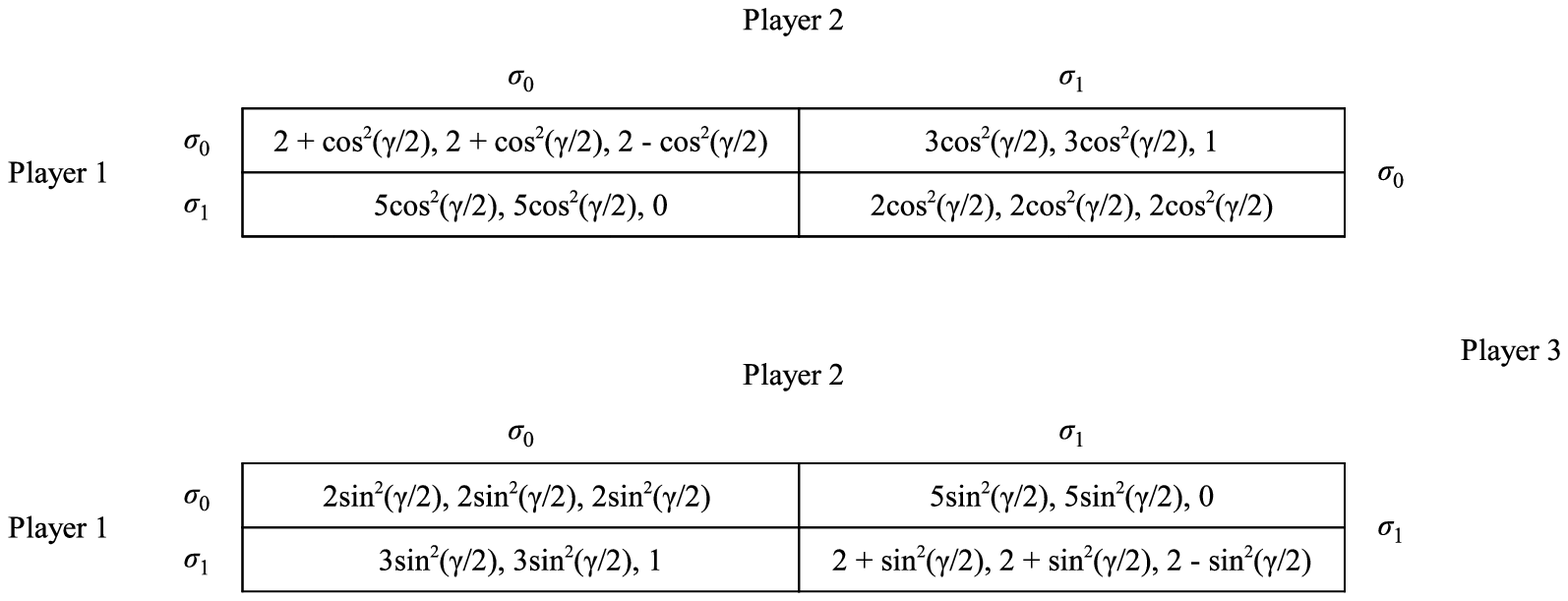}
\caption{A game in strategic form induced by the MW approach to
the normal representation of $\mathrm{\Gamma}_{1}$.}
\label{figure2}       
\end{figure*}
Now, we can analyze the game $\mathrm{\Gamma}^{\mathrm{MW}}_{1}$
like a~classical $2\times 2\times 2$ strategic game. To be fully
precise, such equivalence is assured since we can extend
$\mathrm{\Gamma}^{\mathrm{MW}}_{1}$  to use mixed strategies. A
profile of mixed strategies in the game in Fig.~\ref{figure2}
determines a~probability distribution $\{p_{\kappa_{1},\kappa_{2},
\kappa_{3}}\}$ over profiles $(\sigma_{\kappa_{1}},
\sigma_{\kappa_{2}}, \sigma_{\kappa_{3}})$. Thus, the mixed
strategy outcome in (\ref{gammamw1}) is described simply as an
ensemble $\{p_{\kappa_{1},\kappa_{2}, \kappa_{3}},
(\sigma_{\kappa_{1}} \otimes \sigma_{\kappa_{2}} \otimes
\sigma_{\kappa_{3}})|\psi_{\mathrm{in}}\rangle\}$. Then both the
ensemble and an appropriate profile of mixed strategies in the
game in Fig.~\ref{figure2} generate the same utility outcome.}

\textup{Let us notice now that $\mathrm{\Gamma}^{\mathrm{MW}}_{1}$
is a generalization of the normal representation of
$\mathrm{\Gamma}_{1}$. We are able to get the normal
representation of $\mathrm{\Gamma}_{1}$ out of Fig.~\ref{figure2}
putting the initial state $|\psi_{\mathrm{in}}(0)\rangle$, i.e.,
by putting $\gamma = 0$ into the matrix representation of the game
in Fig.~\ref{figure2}. Moreover, the same is the case for
$|\psi_{\mathrm{in}}(\pi)\rangle$ as well as any initial state
being a basis vector of $\mathscr{H}_{c}$. Then the game
$\mathrm{\Gamma}^{\mathrm{MW}}_{1}$ coincides with the classical
Selten's Horse game up to the order of players' strategies.}

\textup{Now, we examine the six-tuple (\ref{gammamw1}) to find a
reasonable solution for players. Let us determine pure Nash
equilibria in the game $\mathrm{\Gamma}^{\mathrm{MW}}_{1}$ by
solving for each profile $(\sigma_{\kappa_{1}},
\sigma_{\kappa_{2}}, \sigma_{\kappa_{3}})$, where $\kappa_{j} \in
\{0,1\}$ the system of inequalities imposed by the
condition~(\ref{nashequation}). Using values of
$E_{i}(\sigma_{\kappa_{1}}, \sigma_{\kappa_{2}},
\sigma_{\kappa_{3}})$ placed in Fig.~\ref{figure2}, we find that,
for example, the profile $(\sigma_{0}, \sigma_{0}, \sigma_{0})$
 constitutes the
Nash equilibrium if and only if $2\mathrm{cos}^2(\gamma/2)\leq 1$.
Further investigation shows that the profile $(\sigma_{1},
\sigma_{1}, \sigma_{1})$ also fulfills (\ref{nashequation}) with
the requirement $2\mathrm{sin}^2(\gamma/2)\leq 1$ and that there
are no other pure Nash equilibria. Taking into consideration
$\gamma \in (0,\pi)$ we conclude that
\begin{equation} \mathrm{NE_{pure}}(\gamma) = \left\{\begin{array}{lll}
(\sigma_{1}, \sigma_{1}, \sigma_{1}), & \mbox{if} & 0 < \gamma \leq \pi/2;\\
(\sigma_{0}, \sigma_{0}, \sigma_{0}), & \mbox{if} & \pi/2 \leq
\gamma < \pi.
\end{array} \right. \label{roznice}\end{equation}
Let us assume results of the games $\mathrm{\Gamma}_{1}$ and
$\mathrm{\Gamma}^{\mathrm{MW}}_{1}$ to be an equilibrium in pure
strategies. Then formula (\ref{roznice}) shows that each player
can gain from playing game $\mathrm{\Gamma}^{\mathrm{MW}}_{1}$. In
classical case players 1 and 2 can assure themselves 2 utility
units and player 3 can get 1 unit for sure by playing pure
equilibria. All these payoffs are strictly less than the payoffs
corresponding to pure Nash equilibria in
$\mathrm{\Gamma}^{\mathrm{MW}}_{1}$, irrespectively of what is a
value of $\gamma$. Moreover, notice that there is the unique
equilibrium in the game $\mathrm{\Gamma}^{\mathrm{MW}}_{1}$ if
$\gamma \ne \pi/2$ and the same utilities are assigned to both
equilibria in the case $\gamma = \pi/2$. This implies that in the
game $\mathrm{\Gamma}^{\mathrm{MW}}_{1}$ the strategy profile
$(\sigma_{0}, \sigma_{0}, \sigma_{0})$ for $\gamma \in
(\pi/2,\pi)$ and the strategy profile $(\sigma_{1}, \sigma_{1},
\sigma_{1})$ for $\gamma \in (0,\pi/2)$ are reasonable profiles
for all players.}

\textup{An interesting fact worth pointing out is that for
$\gamma$ arbitrary close to 0 or $\pi$ (i.e. for angles defining
the classical game) the equilibrium is unique. This discontinuity
implies possible applications of quantum games to classical game
theory. Namely, the MW approach may serve as a Nash equilibrium
refinement by considering only Nash equilibria that hold out some
slight perturbation of $|\psi_{\mathrm{in}}(0)\rangle$. In fact,
the profile $(\sigma_{1}, \sigma_{1}, \sigma_{1})$ is the unique
pure trembling hand perfect equilibrium in $\mathrm{\Gamma}_{1}$
\cite{selten0} (see also \cite{osborne}, example 252.1). Although
a~further investigation is required, we believe there is a strong
connection between the above method and the Selten's concept of trembling hand equilibrium.}\\

\noindent {\it The~EWL approach.} \, \textup{The second quantum
realization of $\mathrm{\Gamma}_{1}$ is in the spirit of the EWL
protocol. Contrary to the previous one, where the number of
reasonable Nash equilibria was reduced to the unique one, we focus
this time on improving strategic position of only one of the
players. Namely, let us modify the previous quantum game as
follows:
\begin{equation} \label{gammaewl1}
\mathrm{\Gamma}^{\mathrm{EWL}}_{1} = \left(\mathscr{H}_{e},
\{1,2,3\}, |\psi_{\mathrm{in}}\rangle,
\mathrm{id}_{\{1,2,3\}},\{U_{1,2}(\theta,0), U_{3}(\theta,
\alpha)\}, \{E_{i}\}\right),
\end{equation}
\textup{where:}
\begin{itemize}
\item $\mathscr{H}_{e}$ is a Hilbert space $\bigotimes_{j=1}^3
\mathds{C}^2$ with the basis $\{|\psi_{x_{1},x_{2},x_{3}}\rangle
\}_{x_{j \in \{0,1\}}}$ of entangled states defined as follows:
\begin{equation}
|\psi_{x_{1},x_{2},x_{3}}\rangle = \frac{|x_{1},x_{2},x_{3}\rangle
+
i|\overline{x}_{1},\overline{x}_{2},\overline{x}_{3}\rangle}{\sqrt{2}};
\label{basestate}
\end{equation}
\item $|\psi_{\mathrm{in}}\rangle
\mathrel{\mathop:}=|\psi_{000}\rangle$; \item the unitary
strategies are elements of $\{U(\theta, \alpha)\colon \theta \in
[0,\pi], \alpha \in [0,\pi/2]\}$ whose matrix representation with
respect to the computational input and output basis is the
following:
\begin{equation}
U(\theta, \alpha) = \left(\begin{array}{cc}
e^{i\alpha}\cos(\theta/2) &
 i\sin(\theta/2) \\
 i\sin(\theta/2)  & e^{-i\alpha}\cos(\theta/2)
\end{array}\right); \label{twoparameter}
\end{equation}
\item the payoff functional $E_{i}$ is derived from
$\mathrm{\Gamma}^{\mathrm{MW}}_{1}$ with respect to basis states
(\ref{basestate}):
\begin{align} \begin{split}\label{wyplatyewl}E_{1,2} &=
3\sum_{x_{2}}|\langle
\psi_{0,x_{2},0}|\psi_{\mathrm{fin}}\rangle|^2 +
2\sum_{x_{3}}|\langle
\psi_{11,x_{3}}|\psi_{\mathrm{fin}}\rangle|^2
+5|\langle\psi_{100}|\psi_{\mathrm{fin}}\rangle|^2;\\ E_{3} &=
\sum_{x_{2}}|\langle\psi_{0,x_{2},0}|\psi_{\mathrm{fin}}\rangle|^2+
2\sum_{x_{3}}|\langle
\psi_{11,x_{3}}|\psi_{\mathrm{fin}}\rangle|^2 +
|\langle\psi_{101}|\psi_{\mathrm{fin}}\rangle|^2.
\end{split}
\end{align}
\end{itemize}
In this case only the third player is allowed to use unitary
strategies beyond the set of one-parameter operators of the game
$\mathrm{\Gamma}^{\mathrm{EWL}}_{1}$ by using the additional
parameter $\alpha$. We demonstrate now that such extended strategy
set of player 3 significantly improves her strategic position. In
order to see this, let us determine the expected utility $E_{i}$
for each player $i$ that corresponds to a profile of strategies
$\tau = (\theta_{1}, \theta_{2}, (\theta_{3}, \alpha_{3}))$ (since
angles specify strategies of player 1, 2, and 3, we denote them,
for convenience, as $\theta_{1}, \theta_{2}$ and $(\theta_{3},
\alpha_{3})$, respectively). Using formula (\ref{finalstate}) the
final state $|\psi_{\mathrm{fin}}\rangle$ associated with a
profile $\tau$ is the following:
\begin{equation}\label{psifin}
|\psi_{\mathrm{fin}}\rangle = U(\theta_{1},0)\otimes
U(\theta_{2},0)\otimes
U(\theta_{3},\alpha_{3})|\psi_{\mathrm{in}}\rangle =
\frac{1}{\sqrt{2}}\sum_{x\in \{0,1\}^3}\lambda_{x}|x\rangle,
\end{equation}
where
\begin{align}\label{psifindodatek}
\lambda_{x_{1},x_{2},x_{3}} &= i^{\sum
x_{j}}e^{i\overline{x}_{3}\alpha_{3}}\prod_{j}\cos\left(\frac{x_{j}\pi
- \theta_{j}}{2}\right) +(-i)^{\sum
x_{j}}e^{-ix_{3}\alpha_{3}}\prod_{j}\cos\left(\frac{\overline{x}_{j}\pi
- \theta_{j}}{2}\right).
\end{align}
Putting (\ref{psifin}) into formulae (\ref{wyplatyewl}) we obtain
the utility outcomes:
\begin{align}\begin{split}
E_{1,2}(\tau)
&=2\left(\sin^2\frac{\theta_{1}}{2}\sin^2\frac{\theta_{2}}{2}\sin^2\frac{\theta_{3}}{2}
+
\cos^2\frac{\theta_{1}}{2}\cos^2\frac{\theta_{2}}{2}\cos^2\frac{\theta_{3}}{2}\sin^2\alpha_{3}
\right)\\ &\quad+
 \cos^2\frac{\theta_{3}}{2}\cos^2\alpha_{3}\left[3\cos^2\frac{\theta_{1}}{2}
+ \sin^2\frac{\theta_{1}}{2}\left(2 +
3\cos^2\frac{\theta_{2}}{2}\right)\right];\label{quantumgame2utility0}\\
E_{3}(\tau)
&=\cos^2\frac{\theta_{3}}{2}\cos^2\alpha_{3}\left(2\sin^2\frac{\theta_{1}}{2}\sin^2\frac{\theta_{2}}{2}
+ \cos^2\frac{\theta_{1}}{2}\right)\\
&\quad+\cos^2{\frac{\theta_{1}}{2}}\cos^2\frac{\theta_{3}}{2}\sin^2{\alpha_{3}}\left(
1+
\cos^2\frac{\theta_{2}}{2}\right)+\sin^2\frac{\theta_{1}}{2}\sin^2\frac{\theta_{3}}{2}\left(1+\sin^2\frac{\theta_{2}}{2}\right).
\end{split}
\end{align}
As it should be expected in the EWL protocol, we get the classical
game $\mathrm{\Gamma_{1}}$ when the player~3 is also restricted to
use only the strategies $(\theta_{3},0)$. Namely, let us put
$\alpha = 0$, $p\mathrel{\mathop:}=\mathrm{cos}^2(\theta_{1}/2)$,
$q\mathrel{\mathop:}=\mathrm{cos}^2(\theta_{2}/2)$ and
$r\mathrel{\mathop:}=\mathrm{cos}^2(\theta_{3}/2)$ to
Eq.~(\ref{quantumgame2utility0}). Then we obtain
\begin{equation} \begin{split}
E_{1,2}(\theta_{1}, \theta_{2}, (\theta_{3}, 0)) &= 3pr + 5(1-p)qr + 2(1-p)(1-q); \label{classicale1}\\
E_{3}(\theta_{1}, \theta_{2}, (\theta_{3}, 0)) &= pr + (1-p)q(1-r)
+ 2(1-p)(1-q).\end{split}
\end{equation}
Formulae (\ref{classicale1}) are exactly the players expected
payoffs in the classical game $\mathrm{\Gamma}_{1}$ if they choose
their actions $a_{0}, b_{0}$ and $c_{0}$ with probability $p, q,$
and $r,$ respectively. Thus, in the particular case, if $p,q,r \in
\{0,1\}$, we obtain payoffs corresponding to pure strategy
profiles in $\mathrm{\Gamma}_{1}$.}

\textup{Now we solve a problem how player 3 can gain from using
2-parameter operators as her strategies. An interesting feature is
that the game $\mathrm{\Gamma}^{\mathrm{EWL}}_{1}$ keeps the pure
Nash equilibria of the game $\mathrm{\Gamma}_{1}$, i.e., the
profiles: $(0, \pi, (0, 0))$ and $(\pi, \pi, (\pi, 0))$ -
equivalents for the respective Nash equilibria profiles $(a_{0},
b_{1}, c_{0})$ and $(a_{1}, b_{1}, c_{1})$ in
$\mathrm{\Gamma}_{1}$ - are Nash equilibria profiles also in
$\mathrm{\Gamma}^{\mathrm{EWL}}_{1}$. However, unlike in the game
$\mathrm{\Gamma}_{1}$, there is another pure equilibrium $\tau^* =
(0,0,(0, \pi/2))$ where $E_{i}(\tau^*) = 2$ for each player $i$.
This non-equivalence to the classical profile is essential for
strategic position of player 3. She can force the other players to
play strategies from the profile $\tau^*$ (instead of $(0, \pi,
(0, 0))$ - their the most preferred equilibrium) by making an
announcement that she is going to play $\tau^*_{3} = (0,\pi/2)$.
The other players know that this threat is credible enough as the
player 3 does not suffer a loss when she deviates from $(0, 0)$ to
$(0, \pi/2)$, since $E_{3}(0, \pi, (0, 0)) = E_{3}(0, \pi, (0,
\pi/2)) = 1$. However, the opponents of the third player lose 3
utilities, since $E_{1,2}(0, \pi, 0, \pi/2) = 0$. Furthermore,
given $(\theta_{3}, \alpha_{3}) = (0,\pi/2)$ fixed,
 rationality demands that they play strategies dictated by
$\tau^*$ as we have $\mathrm{arg}\max_{\theta_{1},
\theta_{2}}E_{1,2}(\theta_{1},\theta_{2},(0,\pi/2)) = \{(0,0)\}$.
This argumentation allows to treat the profile $\tau^*$ as
reasonable solution of $\mathrm{\Gamma}^{\mathrm{EWL}}_{1}$. Thus,
the strategic position of the player 3 has been significantly
improved in comparison to her classical strategies.}
\label{example1}
\end{example}

The second example is aimed at showing that the proposed scheme of
playing extensive games based on the six-tuple (\ref{sixtuple})
can be applied to extensive games in which some of players have
more than one information set. Unlike in Example~\ref{example1} we
now focus only on converting an extensive game into the form
described by (\ref{sixtuple}) without considering a specific
strategic situation.
\begin{example}
\textup{Let the following extensive game be given:
\begin{equation} \label{drgamma2}
\mathrm{\Gamma}_{2} = \left(\{1,2\}, H, P, \{\mathcal{I}_{i}\},
u\right),
\end{equation}
where
\begin{itemize} \item $H = \{\emptyset,
a_{0}, a_{1}, (a_{0}, b_{0}), (a_{0}, b_{1}), (a_{1}, c_{0}),
(a_{1}, c_{1})\}$; \item $P(\emptyset) = 1$, $P(a_0) = P(a_1) =
2$; \item $\mathcal{I}_{1} = \{\{\emptyset\}\}$, $\mathcal{I}_{2}
= \{\{a_{0}\}, \{a_{1}\}\}$; \:$u(a_{\iota_{1}}, b_{\iota_{2}}) =
O_{\iota_{1},\iota_{2}}$, $\iota_{1},\iota_{2} = 1,2$.
\end{itemize}
Like in the previous example, the game $\mathrm{\Gamma}_{2}$ has
three information sets in which two actions are available.
However, in this case, two information sets represent the
knowledge of player~2. Thus, she specifies an action at each of
them. The game is illustrated in Fig.~\ref{figure3}.
\begin{figure*}
  \includegraphics[width=1\textwidth]{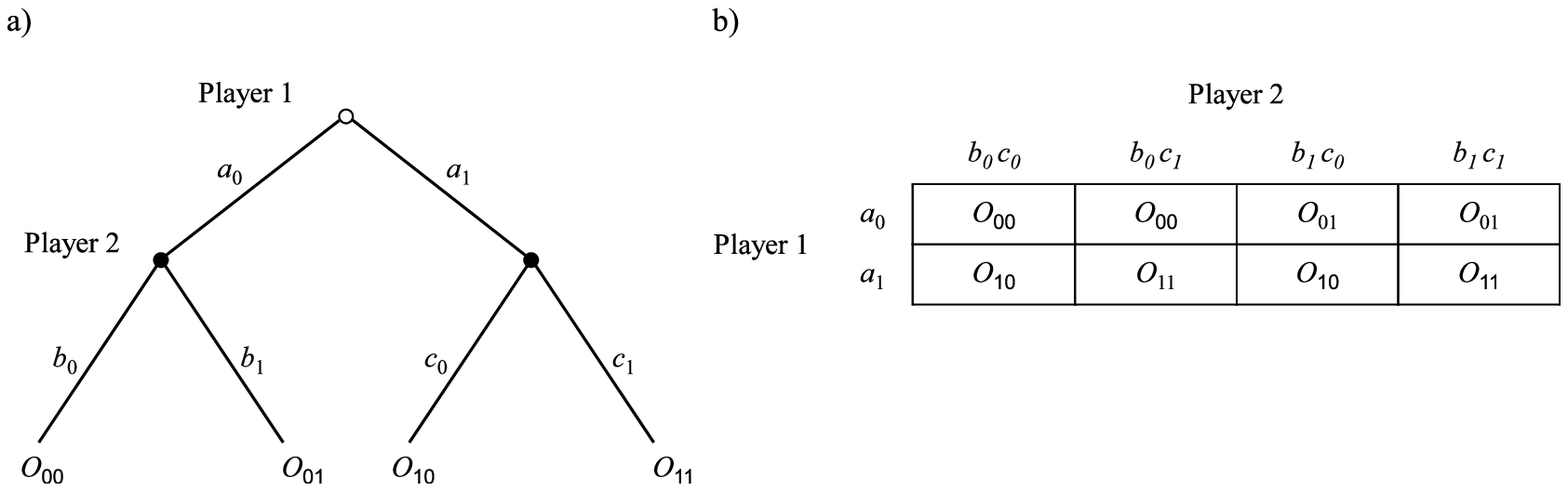}
\caption{The game $\mathrm{\Gamma}_{2}$: an extensive form a) and
a normal form b).}
\label{figure3}       
\end{figure*}
Since there will not be a need to use individual payoffs for
players, we assign an outcome $O_{\iota_{1},\iota_{2}}$ to each of
terminal histories in $H$ for convenience.}\\

\noindent {\it The EWL approach.} \, \textup{Let us first put the
game (\ref{drgamma2}) into the form described by the six-tuple
(\ref{sixtuple}) via the EWL approach. Using the same line of
reasoning as in the proof of Proposition~\ref{propositiondupa},
the number of qubit on which a player is allowed to operate has to
agree with the number of her information sets. To define the
outcome functional $E$ for the quantum game we associate
particular outcomes $O_{\iota_{1},\iota_{2}}$ with appropriate
basis states of $\{|\psi_{x_{1},x_{2},x_{3}}\rangle\}$. For
example, we identify the outcome $O_{10}$ of the game
$\mathrm{\Gamma}_{2}$ with the basis states
$\{|\psi_{1,x_{2},0}\rangle\}_{x_{2} = 0,1}$ measured on
$|\psi_{\mathrm{fin}}\rangle$, and the outcome $O_{00}$ with the
basis states $\{|\psi_{00,x_{3}}\rangle\}_{x_{3} = 0,1}$.
Formally, the EWL approach to (\ref{drgamma2}) is a six-tuple
\begin{equation} \label{ewlapproach2}
\mathrm{\Gamma}^{\mathrm{EWL}}_{2} = \left(\mathscr{H}_{e},
\{1,2\}, |\psi_{\mathrm{in}}(\pi/2)\rangle,
\xi,\{\mathcal{U}_{i}\}, E\right)
\end{equation}
defined by the following components:
\begin{itemize} \item
the map $\xi$ on $\{1,2,3\}$ given by the formula: $\xi(j) =
\left\{\begin{array}{lll}
1, & \mbox{if} & j=1;\\
2, & \mbox{if} & j \in \{2,3\}.
\end{array}; \right.$ \item the set $\mathcal{U}_{i}$ of unitary operators such that $\{U(\theta,0)\} \subseteq
\mathcal{U}_{i} \subseteq \mathsf{SU}(2)$ for $i = 1,2$; \item $E$
is the outcome functional of the form:
\begin{align}
E = \sum_{\iota_{2},x_{3}=0,1}O_{0,\iota_{2}}|\langle
\psi_{0,\iota_{2},x_{3}}|\psi_{\mathrm{fin}}\rangle|^2 +
\sum_{\iota_{2},x_{2}=0,1}O_{1,\iota_{2}}|\langle
\psi_{1,x_{2},\iota_{2}}|\psi_{\mathrm{fin}}\rangle|^2.
\end{align}
\end{itemize}
Let us prove that (\ref{ewlapproach2}) generalizes
(\ref{drgamma2}). Following the definition of
$\mathrm{\Gamma}^{\mathrm{EWL}}_{2}$, the strategy set of player 1
is simply $\mathcal{U}_{1}$, and player 2 chooses her strategies
from the set $\mathcal{U}_{2} \otimes \mathcal{U}_{2}$ since she
operates on the second and the third qubit. Therefore, the final
state $|\psi_{\mathrm{fin}}\rangle$ in the game
$\mathrm{\Gamma}^{\mathrm{EWL}}_{2}$ takes the form of
$\bigotimes^3_{j=1}U_{j}|\psi_{\mathrm{in}}\rangle$, where $U_{j}
\in \mathcal{U}_{\xi(j)}$. Let us assume that the players apply
unitary operators form the set $\{U(\theta,0)\}$. Then the final
state is represented by Eq.~(\ref{psifin}) and
Eq.~(\ref{psifindodatek}) for $\alpha_{3} =0$. It implies that the
expected outcome $E(\theta_{1},(\theta_{2},\theta_{3}))$ equals
\begin{align}\label{ostatnieequation}
E(\theta_{1},(\theta_{2},\theta_{3})) &=
\left(O_{00}\cos^2\frac{\theta_{2}}{2}+
O_{01}\sin^2\frac{\theta_{2}}{2}\right)\cos^2\frac{\theta_{1}}{2}\notag\\
&\quad+ \left(O_{10}\cos^2\frac{\theta_{3}}{2}+
O_{11}\sin^2\frac{\theta_{3}}{2}\right)\sin^2\frac{\theta_{1}}{2}.
\end{align}
By substitution $p\mathrel{\mathop:}= \cos^2\theta_{1}/2$,
$q\mathrel{\mathop:}= \cos^2\theta_{1}/2$, and
$r\mathrel{\mathop:}= \cos^2\theta_{1}/2$,
Eq.~(\ref{ostatnieequation}) shows the expected outcome in game
$\mathrm{\Gamma}_{2}$ when player 1 chooses $a_{0}$ with
probability $p$ and player~2 chooses $b_{0}$ and $c_{0}$ with
probability $q$ and $r$, respectively. To sum up, the six-tuple
(\ref{ewlapproach2}) indeed allows to describe the quantum
extension of the game
$\mathrm{\Gamma}_{2}$ within the EWL approach.}\\

\noindent {\it The MW approach.} \, \textup{In a similar way, we
rewrite the game $\mathrm{\Gamma}_{2}$ using the MW approach by
replacing the components $\mathscr{H}_{e}$,
$\{|\psi_{x_{1},x_{2},x_{3}}\rangle\}$, and $\{\mathcal{U}_{i}\}$
with $\mathscr{H}_{c}$, $\{|x_{1},x_{2},x_{3}\rangle\}$, and
$\{\{U(0,0),U(\pi,0)\}_{i}\}$, respectively. Then similar analysis
to that in Example~\ref{example1} shows that the game given by
$\mathrm{\Gamma}^{\mathrm{MW}}_{2}$ coincides with
$\mathrm{\Gamma}_{2}$ if $|\psi_{\mathrm{fin}}\rangle =
|\psi_{\mathrm{fin}}(0)\rangle$.}
\end{example}
\section{Conclusion}
We have shown that the six-tuple $\left( \mathcal{H}, N,
\rho_{\mathrm{in}}, \xi, \{\mathcal{U}_{j}\}, \{E_{i}\} \right)$
allows to study extensive games using quantum information
language. Although proposed scheme is suitable only for a normal
representation in which some features of corresponding classical
game in extensive form are lost, it yields on valuable information
about how passing to quantum domain influences a course of
extensive games.  The examples we studied have shown that an
extensive game played with the use of both the MW approach and the
EWL approach substantially differs from this game played
classically. Furthermore, the quantum schemes may yield to the
players' significant advantages in the form of better strategic
positions and pointing out reasonable solutions, as it often
happens in the area of strategic games.


\begin{thebibliography}{20}
%
%
\bibitem{survey} Guo, H., Zhang, J., Koehler, G.J.:  A survey of quantum games.  Decis. Support
Syst. 46, 318--332 (2008)
\bibitem{eisert2} Eisert, J., Wilkens, M.: Quantum
games. J. Mod. Opt. 47, 2543 (2000)
\bibitem{eisert} Eisert, J., Wilkens, M., Lewenstein, M.:
Quantum games and quantum strategies. Phys. Rev. Lett. 83,
3077-3080 (1999)
\bibitem{marinatto} Marinatto, L., Weber, T.:  A quantum approach to static games of complete
information. Phys. Lett. A 272, 291-303 (2000)
\bibitem{flitney0} Flitney, A.P., Abbott, D.: Advantage of a quantum player over a classical one in $2 \times 2$
quantum games.  Proc. R. Soc. Lond. A  459, 2463--74 (2003)
\bibitem{fracor1} Fr\c{a}ckiewicz, P.: The ultimate
solution to the quantum battle of the sexes game. J. Phys. A:
Math. Theor. 42, 365305 (2009)
\bibitem{iqbalbackwards} Iqbal, A., Toor, A.H.: Backwards-induction outcome in a quantum
game. Phys. Rev. A 65, 052328 (2002)
\bibitem{khan} Khan, S., Ramzan, M., Khan, M.K.: Quantum Stackelberg duopoly
in the presence of correlated noise. J. Phys. A: Math. Theor. 43
 265304 (2010)
 \bibitem{fracor2} Fr\c{a}ckiewicz, P.: Application of the Eisert-Wilkens-Lewenstein quantum game
scheme to decision problems with imperfect recall. to be published
in J. Phys. A: Math. Theor. arXiv:1012.0806v4 (2011)
\bibitem{osborne} Osborne, M.J., Rubinstein, A.:  A Course in Game Theory.  MIT Press, Cambridge,
MA. (1994)
\bibitem{myerson} Myerson, R.B.: Game Theory: Analysis of
Conflict. Harvard University Press, Cambridge, MA (1991)
\bibitem{nash} Nash, J.F.: Non-cooperative games. Ann.
Math. 54, 289-295 (1951)
\bibitem{nielsen} Nielsen, M.A., Chuang, I.L.: Quantum Computation
and Quantum Information. Cambridge University Press, Cambridge
(2000)
\bibitem{kaye} Kaye, P., Laflamme, R., Mosca M.: An Introduction to Quantum
Computing. Oxford University Press, USA (2007)
\bibitem{harsai} Harsanyi, J.C., Selten, R.: A General Theory of Equilibrium Selection in
Games. MIT Press, Cambridge, MA. (1988)
 \bibitem{selten0} Selten, R.: Reexamination of the perfectness concept for equilibrium points in extensive
games. Int. J. Game Theory 4, 25-55 (1975)
\end{thebibliography}


\end{document}